\newtheorem{lemma}{Lemma}
\title{New Public-Key Crypto-System EHT}
\author{Alessandro Budroni\quad Igor Semaev\\ 
\\Department of Informatics, University of Bergen, Norway\\ \\ \{alessandro.budroni,\,igor.semaev\}@uib.no}
\begin{document}
\maketitle
\begin{abstract}
In this note, an LWE  problem with a hidden trapdoor is introduced. It is used to construct an efficient   public-key crypto-system EHT. The new system is significantly different from LWE based NIST candidates like FrodoKEM. The performance of EHT compares favorably with FrodoKEM.  
\end{abstract}

\section{Introduction}
 The LWE (Learning with Errors) problem was introduced by Regev in \cite{Regev2005}, where  an LWE based  public-key  encryption was described. The problem was there proved to be hard  assuming the hardness of computing  shortest non-zero vectors  in general lattices. Since then 
several lattice based public-key  crypto-systems were invented, see  \cite{Peikert2016}. The NIST Post-Quantum Standardization Process stimulated interest for developing new quantum computer resistant public-key protocols. A number of submission to this competition are LWE or Ring LWE based, see \cite{nist}. In this note an LWE  problem with a hidden trapdoor is introduced. It is used to construct a new efficient   public-key crypto-system EHT.        
\section{ LWE problem}\label{Lwe-problem}
Let $n\leq m$ be positive integers and $q$ be a prime,  bounded by  a fixed polynomial in $n$ for large $n$.  
 Let $A$ be an integer $m\times n$-matrix of rank $n$ modulo  $q$ and $e=(e_1,\ldots,e_m)$ be a column  vector with entries  generated independently according to a non-uniform 
  distribution 
  \begin{equation}\label{distribution_00}
p[0],p[1],\ldots,p[q-1]
\end{equation}
on residues modulo $q$. Also let $x=(x_1,\ldots,x_n)$ be a column vector of integers and $Ax-e\equiv y\mod q$, where 
   $y=(y_1,\ldots,y_m)$. The problem is to find $x \mod q$ given $A,y,q$ and the distribution \eqref{distribution_00}.  The solution is unique for large enough $m>n$   depending on \eqref{distribution_00}. Commonly, a discrete normal (also called Gaussian) distribution $p_{\sigma}[a]$ with mean $0$ and variance $\approx\sigma^2$  is used. 
There are several ways to determine   $p_{\sigma}[a]$  on integers $a$. One may  define it as
\begin{equation}\label{Normal_Distribution}
p_{\sigma}[a]=\frac{e^{-a^2/2\sigma^2}}{\sum_{b\in\mathbb{Z}}e^{-b^2/2\sigma^2}}.
\end{equation}
 However, in practice, it is more convenient to generate a real $b$ according to an ordinary continuous normal distribution with mean $0$ and variance $\sigma^2$ and then round $b$ to the nearest integer $a$. 
 If the standard deviation $\sigma=o(q)$ as $q$ grows, one may assume that the distribution has its support on $a$   such that $|a|\leq (q-1)/2$. That defines a distribution $P_{\sigma}$  on the residues $0,1,\ldots,q-1$ modulo $q$.
 
 \section{Contributions}
Let $k$ be a positive integer. The public key of EHT is an $kn\times n$ matrix $A$ over residues modulo $q$, constructed in Section \ref{public_key}   and the distribution $P_{\sigma}$ introduced in Section \ref{Lwe-problem}.    
Let $x$ be a plain-text block. The cipher-text block $y$ is then computed as 
$y=Ax-e$, where the entries of $e$ are independently generated according to the distribution $P_{\sigma}$. In order to recover the plain-text block $x=(x_1,\ldots,x_n)$  given the cipher-text $y=(y_1,\ldots,y_{kn})$ one has to solve an LWE problem. The private key consists of three matrices $B,T,C$ described in Section \ref{private_key} such that $A=C^{-1}TB$ modulo $q$. With the private key the plain-text $x$  is recovered by a statistical procedure in Section \ref{decryption}.  The decryption failure  probability may be taken different by varying $k$, see Sections \ref{failure} and \ref{asymptotic_failure}. For instance, for using in KEM (Key Encapsulating Mechanisms) $k$ may be chosen to make the probability at most $10^{-5}$, while for PKE (Public Key Encryption) one can  make the probability $10^{-10}$.   With that choice the system is functional and more efficient, that is encryption/decryption work faster, than some of the NIST candidates with comparable security level. Explicit EHT parameters to fit three security levels are in Section \ref{parameters}. 
 The comparison with the NIST candidate FrodoKEM is in Section \ref{performance}. The decryption failure probability in EHT is   higher than that in the most submissions to the NIST competition. However, we have not found any efficient attacks that exploit the property.  There are two main approaches to the EHT cryptanalysis: find plain-text given cipher-text and find private key given public key. In Section \ref{cryptanalysis} the results of the cryptanalysis are presented. The asymptotic  complexity of breaking the crypto-system depends on the parameters, see Section \ref{EHT_parameters}, and is generally exponential in $n$ for large $n$.

The EHT crypto-system was invented by Semaev, who also  analysed  the decryption failure probability in Sections \ref{failure} and \ref{asymptotic_failure}, multiple encryptions of the same plain-text block in Section \ref{multiple_encryptions} and equivalent key recovering in Section \ref{basic_cryptanalysis}.
 The choice of EHT optimised parameters in Section \ref{parameters} and all computer experiments including those in  Section \ref{performance} are due to Budroni. Also, Budroni discovered   key-recovery attacks in Section \ref{variation_1}.    
  
\section{EHT Parameters and Encryption/Decryption } 
In this section the EHT crypto-system is described in detail.
\subsection{Parameters}\label{EHT_parameters}
Let $n,q,\sigma, k,\lambda,\gamma$ be positive integer parameters, where $q=n^{c_q},\sigma=n^{c_\sigma},\lambda=n^{c_\lambda}$ for some $c_q,c_\sigma,c_\lambda$, where $c_q>c_\sigma+c_\lambda$ and $k$ is to be defined later. 

\subsection{Private Key} 
\label{private_key}
The system private key consists of three matrices $B,T,C$.
\begin{enumerate}
\item The matrix $B$ is an integer $n\times n$ matrix of rank $n$ modulo $q$, whose rows are $B_1,B_2,\ldots,B_n$.
\item The matrix $T$ is an integer $kn\times n$ matrix  
$$T=
\begin{pmatrix}
t_{11}&0&\ldots&0\\
\ldots& &      &\\
t_{k1}&0&\ldots&0\\
0&t_{12}&\ldots&0\\
\ldots& &      &\\
0&t_{k2}&\ldots&0\\
\ldots& &      &\\
0&0&\ldots&t_{1n}\\
\ldots& &      &\\
0&0&\ldots&t_{kn}
\end{pmatrix} ,
$$
where the entries  $t_{ij}, 1\leq i\leq k, 1\leq j\leq n$ are non-zero modulo $q$. To reduce the decryption failure probability, in particular, to accept an incorrect plain-text block the following condition may be satisfied  for every $1\leq j\leq n$:
\begin{enumerate}
\item the residues $t_{ij}, 1\leq i\leq k$ are different,
\item for every  $a\ne 0\mod q$ not all the residues $at_{ij}\mod q, 1\leq i\leq k$ are close to $0$.
\end{enumerate}
The conditions are satisfied for randomly chosen $t_{ij}$ with high probability.

\item The matrix $C$ is an integer $kn\times kn$ matrix  whose rows $C_1,\ldots,C_{kn}$ are of norm at most $\lambda$. For a  correct decryption, the  dot-products $C_{u}C_{v}$ for $1\leq u<v\leq kn$   have to be very small, say bounded by  $\gamma$ in absolute value.
We  assume $C$ is invertible modulo $q$. The matrix $C$ may be constructed  with one of the following methods.
\end{enumerate}

\subsection{ Three Methods to Construct the Matrix $C$ }\label{matrix_C}
In this section we present  methods to construct the matrix $C$. 
 \begin{enumerate}
\item To define a row $C_s$  one may take a random subset $\{i_1,\ldots,i_{\lambda^2}\}$ of $\{1,2,\ldots,kn\}$ of size $\lambda^2$ and put
$$C_{sj}=
\begin{cases}  
\pm 1,& j\in \{i_1,\ldots,i_{\lambda^2}\},\\
0,& j\notin \{i_1,\ldots,i_{\lambda^2}\}.
\end{cases}
$$
The average of the dot-products $C_{u}C_{v}, u\ne v$ is $0$.
\item One can use combinatorial configurations(incidence structures). For instance, we may use 
a protective plane over a finite field of size $p$. The number of points on the plane is $p^2+p+1=nk$, there are $p^2+p+1$ lines on the plane, each line has $p+1=\lambda^2$ points, each two lines have exactly one point in common. The points on the plane and the lines may be permuted with  secret permutations $P$ and $Q$ accordingly.
We take lines as subsets $\{i_1,\ldots,i_{\lambda^2}\}$ and choose the rows of $C$ by inscribing $1$ or $-1$ to the points on the line and $0$ to all other points.  For this construction $\gamma=1$. 
 
\item Let $\lambda^2=2^s$ and $kn=r2^s$ for some integers $r,s$ and $H$ be a Hadamard matrix of size $2^s\times 2^s$. The rows and columns of $H$ are indexed by binary $s$-strings $a$ and $b$ respectively. The entry of $H$ in the row $a$ and the column $b$ is $(-1)^{ab}$, where $ab$ is the dot-product of $a,b$. Let $P,Q$ be  secret $nk\times nk$ permutation matrices. We set $C=P(H\otimes I_r) Q$, where $I_r$ is a unity matrix of size $r\times r$. That is 
$$C=P\begin{pmatrix}
H&0&\ldots&0\\
0&H&\ldots&0\\
& &\ldots &\\
0&0&\ldots&H
\end{pmatrix}Q,\qquad
C^{-1}=2^{-s}Q^{-1}\begin{pmatrix}
H&0&\ldots&0\\
0&H&\ldots&0\\
& &\ldots &\\
0&0&\ldots&H
\end{pmatrix}P^{-1}
$$ 
as $H^2=2^s\,I_{2^s}$.
The rows of $H$ are orthogonal, so the rows of $C$ are orthogonal too and $\gamma=0$.
In order to avoid weak keys found in Section \ref{variation_1}, instead of $P$ one uses a permutation $P^{*}$ defined in Section \ref{P*}.
\end{enumerate}

In all cases the rows of $C$ are  of norm $\lambda$. We will need to invert the matrix $C$ modulo $q$ to construct the system public key. For the first two methods the inversion is rather slow as the matrix size $kn\times kn$ may be large. For the third method the inversion is trivial. The experiments in Section \ref{performance} were conducted with EHT based on the third method for constructing $C$.  

 \subsection{Permutation $P^*$} 
\label{P*} The rows of the matrix $C$ may be split into chunks of size $k$ as $C_{1+k(i-1)},\ldots,C_{ki}$ for $i=1,\ldots,n$. Assume that two rows with the same positions for non-zero entries fall into one chunk. According to the cryptanalysis in Section \ref{variation_1}, they may be recovered faster than in the general case and this may lead to a weak key of the crypto-system. For instance, let $n=8,k=2,\lambda^2=4$ and $P$ be defined by Table \ref{Table_P}, and $C=P(H\otimes I_r) Q$ as above.
\begin{table}
\caption{Permutation $P$}
\begin{center}
\begin{tabular}{|c|rrrrrrrrrrrrrrrr|}\hline
$x$&1&2&3&4&5&6&7&8&9&10&11&12&13&14&15&16\\\hline
$T(x)$&14&1&10&9&6&13&5&16&15&2&12&3&7&11&8&4\\\hline
\end{tabular}
\end{center}
\label{Table_P}
\end{table}%
The pairs of rows $(C_5,C_6)$, $(C_7,C_8)$ and $(C_9,C_{10})$ have the same non-zero positions. One needs to choose   $P=P^{*}$ to avoid this. That is easy to do if $\lambda^2|n$.

One first defines a mapping $M:\{1,2,\ldots,nk\}\rightarrow\{1,2,\ldots,n\}$ such that the values $M(l+(j-1)\lambda^2)$ are different for $l=1,\ldots,\lambda^2$ and every fixed $j$ from $1\leq j\leq nk/\lambda^2$. Then one defines a permutation $P^{*}$ on $1,2,\ldots,nk$ by the rule
$P^{*}(l+(j-1)\lambda^2)\in\{1+k(i-1),\ldots,ki\}$, where $i= M(l+(j-1)\lambda^2)$. For instance, the mapping  $M$ and the permutation $P^{*}$ may  be defined by Table \ref{Table_M}.
\begin{table}
\caption{Mapping $M$ and Permutation $P^{*}$ }
\begin{center}
\begin{tabular}{|c|rrrrrrrrrrrrrrrr|}\hline
$x$&1&2&3&4&5&6&7&8&9&10&11&12&13&14&15&16\\\hline
$M(x)$&1&2&3&4&5&6&7&8&1&3&5&7&2&4&6&8\\\hline
$P^{*}(x)$&1&3&5&7&9&11&13&15&2&6&10&14&4&8&12&16\\\hline
\end{tabular}
\end{center}
\label{Table_M}
\end{table}%
The number of possible mappings $M$ and therefore the number of possible permutations $P^{*}$ is very large. So this restriction does not  affect the security.

\subsection{Public Key}\label{public_key} The public key is an integer $kn\times n$ matrix $A\equiv C^{-1}TB\mod q$ and $\sigma$.

\subsection{Encryption} Let $x$ be a plain-text encoded by a column vector of size $n$ of residues modulo $q$. We may assume  the plain-text has some natural redundancy. That significantly reduces the probability of accepting an incorrect plain-text block $x$ as correct at the decryption.  If the plain-text does not have  natural redundancy then it may be artificially introduced.  For the parameters chosen in Section \ref{parameters} we need at most $2\log_2q$ bits redundancy, which
 may be provided with a linear $\mathbb{F}_q$-code of length $n$ and  dimension $n-2$. This only negligibly affects the security of the crypto-system. Another option is to use Optimal Asymmetric Encryption Padding (OAEP), see \cite{BR94}. 
 
  The cipher-text is 
$y\equiv Ax-e\mod q$, where $e$ is a column vector of size $m$ whose entries are residues modulo $q$ independently generated according to the distribution $P_{\sigma}$ defined by \eqref{Normal_Distribution}.   
\subsection{Decryption}\label{decryption} By multiplying the both sides of $y\equiv Ax-e\mod q$ with $C$ one gets
$z\equiv Tb-Ce\mod q,$
 where  
$z=Cy=(z_1,\ldots,z_{kn}),b=Bx=(b_1,\ldots,b_{n})$
are column vectors of length $kn$ and $n$ respectively.  The entries of  $Ce$ have  discrete normal distributions $ P_{\sigma\lambda}=(p_{\sigma\lambda}[0],\ldots,p_{\sigma\lambda}[q-1])$ with mean $0$ and standard deviation  $\approx\sigma\lambda=o(q)$, as $c_q>c_\sigma+c_\lambda$. By the definition of the matrix $T$ in Section \ref{private_key}, 
\begin{eqnarray}
&C_{1+k(i-1)}e\equiv t_{1i}b_i-z_{1+k(i-1)} \mod q,\nonumber\\
&\ldots\nonumber\\
&C_{k+k(i-1)}e\equiv t_{ki}b_i-z_{k+k(i-1)}\mod q.\nonumber
\end{eqnarray}
One now finds $b_i=B_ix, 1\leq i\leq n $ with a statistical procedure. For each residue $a$ 
 the value of the statistic
\begin{equation}\label{statistic}
S_i(a)=\sum_{j=1}^k\ln(q\,p_{\sigma\lambda}[t_{ji}a-z_{j+k(i-1)}])
\end{equation}
is computed, where $t_{ij}a-z_{j+k(i-1)}$ is taken modulo $q$.  One decides $b_i=a$ if $ S_i(a)>0$. In practice, it may be more convenient to
 use an ordinary continuous normal density  $p_{\sigma\lambda}[x]=e^{-x^2/2(\sigma\lambda)^2}/\sqrt{2\pi(\sigma\lambda)^2}$ for  residues $x$ modulo $q$ such that $|x|\leq(q-1)/2$. Also, one may decide $b_i=a$ for a residue $a$ which maximises $S_i(a)$, however, in this case the decryption failure probability is   difficult to predict. By a decryption failure we mean rejecting the correct plain-text block $x$ or accepting an incorrect one. 
If $k$ is large enough, then the decryption is unique and correct with high probability as it follows from the next Section.

\subsection{Decryption Failure Probability}\label{failure}
The asymptotic analysis in Section \ref{asymptotic_failure} shows that $k$ may be taken constant for large $n$ if $c_{\sigma}+c_{\lambda}<c_{q}$. In this section we find explicit expressions for the decryption failure probability.  The distribution $P_{\sigma\lambda}$ may be approximated by an ordinary continuous normal density with mean $0$ and standard deviation $\sigma\lambda$, so
\begin{equation}\label{approx} 
\ln(q\,p_{\sigma\lambda}[x])\approx\ln(\frac{q}{\sigma\lambda\sqrt{2\pi}})-\frac{x^2}{2(\sigma\lambda)^2}.
\end{equation}
Then $S_i(a)\approx\sum_{j=1}^k \ln(\frac{q}{\sigma\lambda\sqrt{2\pi}})-\frac{x_j^2}{2(\sigma\lambda)^2}$, where $x_j\equiv t_{ji}a-z_{j+k(i-1)}\mod q$.
 Let $\delta^2=2k(\sigma\lambda)^2\ln(\frac{q}{\sigma\lambda\sqrt{2\pi}})$. According to the test $S_i(a)>0$, one decides $b_i=a$ if 
 \begin{equation}\label{ball}
\sum_{j=1}^k x_j^2<\delta^2.
\end{equation}
We consider two cases. First, let $b_i=a$. Then $x_1,\ldots,x_k$ are independently generated with the distribution $P_{\sigma\lambda}$. The success (that is to accept  $b_i=a$)  probability is  
$$\beta=\mathbf{Pr}(S_i(a)>0)\approx\mathbf{Pr}(\chi^2_k<(\delta/\sigma\lambda)^2),$$
where $\chi^2_k$ denotes a random variable distributed as $\chi$-square with $k$ degrees of freedom. The probability to reconstruct correctly all $b_i=B_ix$ and therefore to recover the correct $x$  is then $\beta_1=\beta^n$.
 Second, let $ b_i\ne a$. Then 
 $$x_j\equiv t_{ji}a-z_{j+k(i-1)}\equiv
 t_{ji}(a-b_i)+C_{j+k(i-1)}e \mod q.$$
So  
  we can assume $x_1,\ldots,x_k$ are independently and uniformly distributed over residues modulo $q$. By \eqref{ball}, the probability to decide $b_i=a$ for an incorrect $a$ is
$$\alpha\approx \frac{\hbox{Vol}_k(\delta)}{q^k}=\frac{\pi^{k/2}\,\delta^k}{\Gamma(k/2+1)\,q^k},$$
where $\hbox{Vol}_k(\delta)$ is the volume of a $k$-ball of radius $\delta$. Therefore, the probability to accept at least one incorrect $b_i=a$ is at most $\alpha_1=nq\alpha$. If the plain-text block $x$ is a codeword in a linear $\mathbb{F}_q$-code of length $n$ and dimension $n-2$ the probability to accept an incorrect plain-text block is at most $\alpha_1q^{-2}=n\alpha/q$. Therefore, the decryption failure (to reject the correct $x$ or to accept an incorrect one)  probability is at most $(1-\beta_1)+\alpha_1q^{-2}\approx (1-\beta_1)$ for the parameters in Section \ref{parameters}. So the most probable failure is to reject both the correct and incorrect plain-text blocks $x$ and the probability of this is around $ (1-\beta_1)$. Obviously, the probability to  reject the correct $x$ and to accept an incorrect one is much smaller.

 Let $q=1021,n=128,\sigma=5.105,\lambda=4$. The following Figure \ref{alpha_beta}  shows the experimental probabilities $1-\beta_1$ and $\alpha_1$ of the decryption failures for a variety of $k=7,\ldots,13$ in comparison with their estimates. The estimates are  rather accurate upper bounds for the actual probabilities.

\begin{figure}[h]
  \center{\includegraphics[width=15cm]{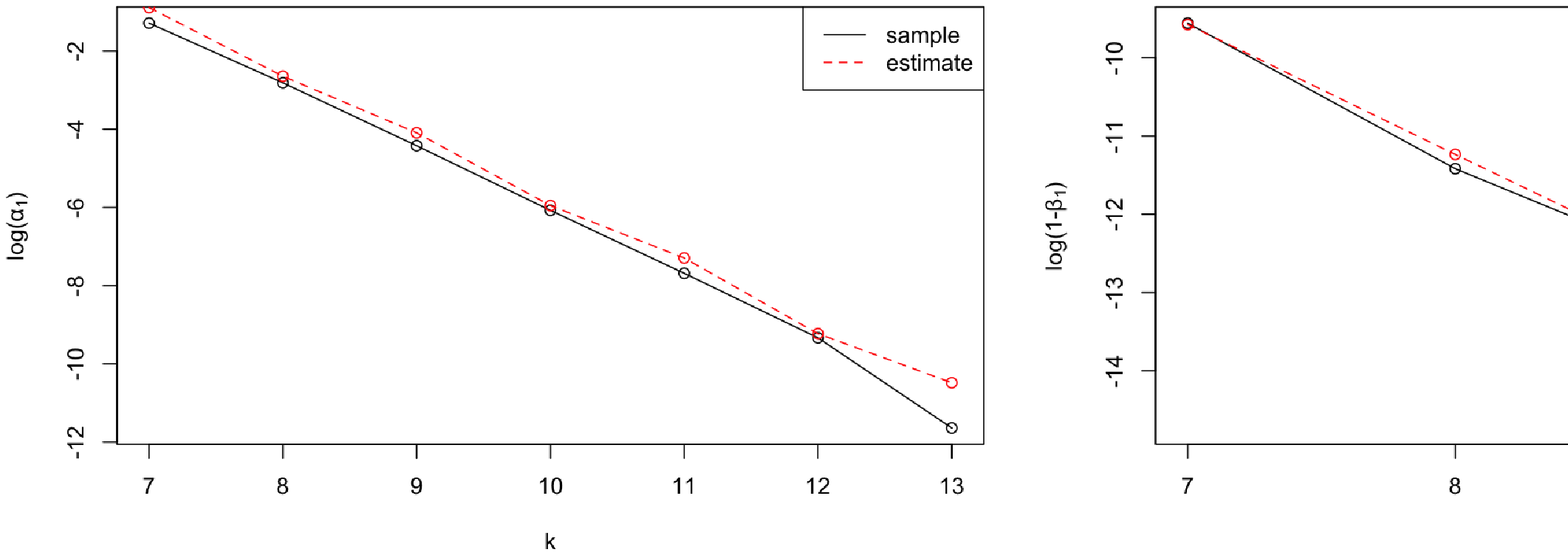}}\\
  \caption{Experimental and estimated error probabilities $\alpha_1$ and $1-\beta_1$ }\label{alpha_beta}
\end{figure}

 \subsection{Complexity Parameters}
 The size of the public key is $kn^2$ residues modulo $q$. The  encryption cost (to encrypt $n\log_2q$ bits of the plain-text) is $ kn^2$ multiplications modulo $q$. For binary $x$ the cost (to encrypt $n$ bits of the plain-text) is $ kn^2$ additions modulo $q$. The decryption cost is $kn\lambda^2$ additions modulo $q$ to compute $z=Cy$. This may be improved to $kn\log_2(\lambda^2)$ additions with the fast Walsh-Hadamard transform. Then it takes  
  $knq$ additions modulo $q$  to compute $nq$ values of the statistic, and $n^2$ operations to solve the linear equation system $b_i=B_ix, i=1,\ldots,n$ to recover $x$, where the inversion of $B$ modulo $q$ is precomputed. 
  The matrix $B$ may be chosen to accelerate the solution of the system  to $<n^2$ operations.  
\subsection{Variant}\label{variant}
To construct private/public key one may take  
$$T=
\begin{pmatrix}
t_{11}&0&\ldots&0\\
\ldots& &      &\\
t_{k1}&0&\ldots&0\\
t_{k+1\,1}&t_{1\,2}&\ldots&0\\
\ldots& &      &\\
t_{2k\,1}&t_{k\,2}&\ldots&0\\
\ldots& &      &\\
t_{(n-1)k+1\,1}&t_{(n-2)k+1\,2}&\ldots&t_{1\,n}\\
\ldots& &      &\\
t_{nk\,1}&t_{(n-1)k\,2}&\ldots&t_{k\,n}
\end{pmatrix}. 
$$ 
 The decryption algorithm is then easy to adjust.
\section{Cryptanalysis}\label{cryptanalysis} We consider two approaches to the cryptanalysis of LWE-Trapdoor crypto-system:  find plain-text from the cipher-text and recover private key from the  public key.
\subsection{ Plain-Text Recovering}\label{pl_text_recover}
 To reconstruct  the plain-text $x$ from the cipher-text $y$ one has to solve an LWE problem $y\equiv Ax-e\mod q$, where $A$ is a matrix of size $kn\times n$. According to \cite{HKM17} the complexity of this problem is exponential $2^{O(n)}$ with both lattice-base algorithms and amplified BKW if the structure of $A$ is not taken into account.

  Let's consider the so called primal attack.
 One may  assume that $A=\begin{pmatrix}A_1\\A_2\end{pmatrix}$, where $A_1$ is a matrix of size $n\times n$,  invertible modulo $q$ and the matrix $A_2$ is of size $(kn-n)\times n$, and let $y=\begin{pmatrix}y_1\\y_2\end{pmatrix}$ and $e=\begin{pmatrix}e_1\\e_2\end{pmatrix}$ accordingly. Then  $A_1x-e_1=y_1,A_2x-e_2=y_2$. So $x=A_1^{-1}e_1+y_1$ and $A_2A_1^{-1}e_1-e_2-y_2+A_2A_1^{-1}y_1=0$. The vector $w=(e,1)$ satisfies $A'w\equiv 0 \mod q$, where $A'$ is a horizontal concatenation of
 $A_2A_1^{-1},-I_{kn-n},-y_2+A_2A_1^{-1}y_1$. The size of $A'$ is  $(kn-n)\times (kn+1)$. 
 
  Let $m\leq kn$ be a parameter and  
 let $B$ be a matrix constructed with $m$ rows of  $A'$ ignoring some zero columns. The size of $B$ is $m\times (m+n+1)$. Then 
$A'w\equiv 0 \mod q$ implies $Bv\equiv 0 \mod q$, where $v=(f,1)$ and $f$ is a sub-vector of $e$ of length $m+n$.  
 Thus  $v$ belongs to a lattice of dimension $m+n+1$ and of volume $\hbox{Vol}=q^{m}$.
 
  The expected norm of $v$ is at most $\sqrt{\sigma^2 (m+n)+1}\approx \sigma\sqrt{m+n}$.  The Block Korkine-Zolotarev (BKZ) algorithm is applied to reduce the basis of the lattice and recover the shortest non-zero vector, which is likely to be $v$. Then $x$ is computed by solving a system of linear equations modulo $q$. The reduction algorithm calls   Shortest Vector Problem (SVP) oracle for a lattice of a smaller dimension $\beta\leq m$  a polynomial number of times \cite{Hanrot}. The cost of one call is $\geq 2^{0.292\beta}$ operations with a sieving algorithm according to \cite{Laarhoven}. One takes the smallest $\beta$   to satisfy the inequalities
$$ \sqrt{\beta}\sigma\leq \delta^{2\beta-d}\,\hbox{Vol}^{1/d},\quad \delta=\left((\pi\beta)^{1/\beta}\beta/2\pi e\right)^{1/2(\beta-1)}>1.$$

The sieve algorithms may benefit from the Grover's quantum search and this pushes the complexity down to $2^{0.265b}$ \cite{LaarhovenMosca,Laarhoven15}. As remarked in \cite{AD}, since these algorithms require building a list of lattice vectors of size of $2^{0.2075b}$, it is plausible to believe that the best quantum SVP algorithm would run in $2^{0.2075b}$.
    
The so called dual attack \cite{Regev2005} is based on finding a large number of short vectors $u,v$ such that $uA\equiv v\mod q$. One then applies a statistical test to recover the entries of $x$. The dual attack is generally inferior compared to the primal attack, see \cite{FrodoKEM}.

\subsection{Multiple Encryptions}\label{multiple_encryptions}
Assume that the same plain-text block $x$ was encrypted  $s$ times with independently generated error vectors $e(i)=(e_{i1},\ldots,e_{im}), i=1,\ldots,s$. So $s$ cipher-texts $y(i)=Ax-e(i)$ are available for the cryptanalysis, where $y(i)=(y_{i1},\ldots,y_{im})$ and $|y_{ij}|<q/2$.  Let's fix an index $j$ in $1,\ldots,m$ and let $A_j$ be a row in $A$. Then $y_{ij}=A_jx-e_{ij}\mod q$  are taken from a discrete normal distribution with mean $A_jx \mod q$ and variance $\sigma^2$.   So $\sum_{i=1}^s y_{ij}/s$ is  normally distributed  with  mean $A_jx \mod q$ and variance $\sigma^2/s$. If $s=O(\sigma^2)$, then   one recovers $A_jx \mod q$ as $\sum_{i=1}^s y_{ij}/s\approx A_jx \mod q$. One does this for  some $n$ values of $j$ and then solves a system of linear equations to recover $x$. 

A kind of DOS (denial of service) attack is applicable to recover $x$ with this method. To protect against that sort of attacks,  each plain-text block may be made dependent on a counter value or time stamp before encryption.

\subsection{Equivalent and Private Key Recovering} 

\subsubsection{Framework}
 Let   $C,T,B$ be any triplet satisfying the restrictions of Section \ref{private_key} and $CA=TB$. Let the norm of the rows of $C$ is at most $\mu$. If $\mu$ is not significantly larger than $\lambda$, then  $C,T,B$ is the system equivalent private key and it may be used to decrypt data. However, if $\mu$ is significantly larger than $\lambda$, then the decryption fails as it  is not  unique. 
The equality $CA=TB$ implies
 $$
C_{1+k(i-1)}A=t_{1i}B_i,
\ldots,
C_{k+k(i-1)}A=t_{ki}B_i,$$
for $i=1,\ldots,n$ and the rows of $C$ and $B$. Let's assume $t_{1i}=1$ and  eliminate $B_i$ from the equations (if $t_{1i}\ne1$, then we eliminate $t_{1i}B_i$).   The vector $\bar{C}_i=(C_{1+k(i-1)},\ldots,C_{k+k(i-1)})$ is of size $k^2n$ and it satisfies
\begin{equation}\label{CA=TB}
\bar{C}_i\bar{A}\equiv 0\mod q,\quad\bar{A}=
\begin{pmatrix}
t_{2i}A&t_{3i}A&\ldots&t_{ki}A\\
-A&0&\ldots&0\\
0&-A&\ldots&0\\
\ldots& & &\\
0&0&\ldots&-A
\end{pmatrix},
\end{equation}
where $\bar{A}$ is a matrix of size $k^2n\times (k-1)n$.      
Therefore, $\bar{C}_i$ belongs to a lattice of  volume $q^{n(k-1)}$ and of dimension $k^2n$ for some residues $t_{2i},\ldots,t_{ki}$.
The norm of  $\bar{C}_i$ is at most $\mu\sqrt{k}$.

 \subsubsection{ Private Key Recovering}\label{variation_1}
   The rows of $C$ have to satisfy \eqref{CA=TB} and so   
\begin{eqnarray}\label{equality_2}
(
C_{1+k(i-1)},C_{j+k(i-1)})
\begin{pmatrix}
t_{ji}A\\
-A
\end{pmatrix}\equiv 0\mod q.
\end{eqnarray}
 One has to  find all pairs of non-collinear vectors $C_1,C_2$ of length $kn$ with $\lambda^2$ non-zero entries $\pm1$ each such that
 $ tC_1A-C_2A\equiv 0\mod q$ for some residues $t$. Then $C$ is easy to recover and the matrices $T$ and $B$ are  easy to recover too.
 
\begin{description}
\item[Brute Force Attack.]One  brute forces such $C_1,C_2$ and finds  collinear rows (matches) $C_1A,C_2A$. To this end one divides (scales) $C_1A,C_2A$ by their left most non-zero entries and searches for all the matches. The  
  number of trials is  $2^{2\lambda^2}{kn\choose \lambda^2}{kn-\lambda^2\choose \lambda^2}$. 
    \item[Time-Memory-Trade-Off.] One defines the table of all scaled $C_1A$. The table is sorted to find all the matches. 
 At the expense of  memory size $V=2^{\lambda^2}{kn\choose \lambda^2}$, the complexity is $O(V\log V)$.
 \item[Lattice Based Method.]
 Let $h\leq kn$ be an integer parameter and $S_1,S_2\subseteq\{1,2,\ldots,kn\}$ be two subsets of size $h$. Let $A_i$ be a sub-matrix constructed with  $h$ rows of  $A$ whose indices are in $S_i$. If the positions of non-zero entries of $C_i$ are in $S_i$, then $ tC_1A_1-C_2A_2\equiv 0\mod q$. Therefore $(C_1,C_2)$ belongs to a lattice of dimension $2h$ and of  volume $q^n$ for a residue $t$ modulo $q$. The norm of $(C_1,C_2)$ is $\lambda\sqrt{2}$.  For each  $t$ one tries random subsets $S_1,S_2$ and applies BKZ algorithm until  $(C_1,C_2)$ is found.   The number of BKZ applications  to recover all the rows of $C$ is
 $q\left({kn\choose{h}}/{kn-\lambda^2\choose{h-\lambda^2}}\right)^2$        
 on the average. One finds $h$ to minimise the overall cost.
\item[Possible Weak Keys.] Assume that the non-zero positions of  $C_1=C_{j_1+k(i-1)}$ and $C_2=C_{j_2+k(i-1)}$ are the same for some $1\leq j_1<j_2\leq k$ and  $1\leq i\leq n$.  This event  happens quite often with the third method for constructing $C$ in Section \ref{matrix_C} when a random row permutation $P$ is used and  $\lambda^2>\sqrt{n}$.  To recover $C_1$ and $C_2$    one  tries only one  subset $S\subseteq\{1,2,\ldots,kn\}$ of size $h$ not two. One applies BKZ 
 to a lattice of dimension $2h$ and volume $q^n$ to recover  $C_{1},C_{2}$. 

The vector $C_1A$ is collinear to the row   $B_i$  of  $B$. So  $C_{j+k(i-1)}A\in \langle B_i\rangle=\langle C_1A\rangle$ for every $1\leq j\leq k$. Let the non-zero positions of $C_{j+k(i-1)}$ be in a subset $S_1\subseteq\{1,2,\ldots,kn\}$ of size $h_1\geq \lambda^2$ and let $A_1$ be a sub-matrix of $A$ constructed with $h_1$ rows of $A$ whose indices are in $S_1$. Then $C_{j+k(i-1)}A_1\in \langle B_i\rangle$. The vector $C_{j+k(i-1)}$ of norm $\lambda$ belongs to a lattice of dimension $h_1$ and of volume $q^{n-1}$. One guesses the subset  $S_1$ and runs BKZ to recover $C_{j+k(i-1)}$. With a similar approach one finds all the rows of $C$. 
 The values $h,h_1$ are chosen to minimise the overall cost. The method is significantly faster than all the previous methods of cryptanalysis. However, the event  does not happen for any $i$   if the row permutation $P^{*}$ instead of $P$ is used to construct $C$, see Section \ref{P*}. The system does not then admit such weak keys.     

    \end{description}

\subsubsection{  Equivalent Key Recovering}
 \label{basic_cryptanalysis}
 To construct an equivalent key one may try to  find short vectors $\bar{C}_i$(say of norm $\leq \mu\sqrt{k}$) such that  $\bar{C}_i\bar{A}\equiv 0\mod q$, see \eqref{CA=TB}. Thus $\bar{C}_i$ belongs to a lattice of dimension $k^2n$ and volume $q^{(k-1)n}$. 
  Then   $B_i\equiv C_{1+k(i-1)}A$ thus constructing the system equivalent  key. Since the size of $\bar{A}$ is very large, that seems a hard task.

For some  $t_{2i},\ldots,t_{ki}$ the lattice may contain   very short vectors $\bar{C}_i$ which do not depend on $A$, see Section \ref{variation_2} below.  The component vectors $C_{1+k(i-1)},\ldots,C_{k+k(i-1)}$ are then collinear. The random variables $C_{1+k(i-1)}e,\ldots,C_{k+k(i-1)}e$ are dependent and the   decryption fails.

 \subsubsection{ Very short vectors in the lattice $\bar{C}\bar{A}\equiv 0 \mod q$}\label{variation_2}
 Let
 $$\bar{I}=
\begin{pmatrix}
t_{2}I&t_{3}I&\ldots&t_{k}I\\
-I&0&\ldots&0\\
0&-I&\ldots&0\\
\ldots& & &\\
0&0&\ldots&-I
\end{pmatrix},
$$  
 for some non-zero residues $t_2,\ldots,t_k$, where $I$ is a unity matrix of size $kn\times kn$ and so $\bar{I}$ is a matrix of size $k^2n\times k(k-1)n$.  Let
$$\bar{R}=
\begin{pmatrix}
A&0&\ldots&0\\
0&A&\ldots&0\\
\ldots& & &\\
0&0&\ldots&A
\end{pmatrix},
$$  
be a matrix of size $k(k-1)n\times (k-1)n$. Obviously, $\bar{A}=\bar{I}\bar{R}$ for $t_j=t_{ji}$. Let an integer vector 
$$\bar{C}=(c_1,c_2,\ldots,c_{k^2n})$$ of size $k^2n$ satisfy 
$\bar{C} \bar{I}\equiv 0 \mod q$, then $\bar{C} \bar{A}\equiv 0 \mod q$. We want to  construct  $\bar{C}$ with a low norm. The congruence  $\bar{C} \bar{I}\equiv 0 \mod q$ implies
\begin{equation}\label{lattice_L}
\begin{pmatrix}
c_{i},c_{i+kn},\ldots,c_{i+k(k-1)n}
\end{pmatrix}
\begin{pmatrix}
t_{2}&t_{3}&\ldots&t_{k}\\
-1&0&\ldots&0\\
0&-1&\ldots&0\\
\ldots& & &\\
0&0&\ldots&-1
\end{pmatrix}\equiv 0\mod q
\end{equation}
for $i=1,\ldots,kn$. 
Let
$V_i=(c_{i},c_{i+kn},\ldots,c_{i+k(k-1)n})$. 
 The vectors $V_i$ belong to the same lattice $L(t_2,\ldots,t_k)$  of dimension $k$ and volume $q^{k-1}$ defined by \eqref{lattice_L}. For the majority of $t_2,\ldots,t_k$ the shortest non-zero vector of the lattice  has norm around $\sqrt{k}\,q^{1-1/k}$. Those vectors may be used to construct a vector $\bar{C}$   with norm around $\sqrt{k^2n}\,q^{1-1/k}$. That is much larger than $\lambda\sqrt{k}$, the norm of the hidden vectors $(C_{1+(i-1)k},\ldots,C_{k+(i-1)k})$ used in the construction of the public matrix $A$. However,  the lattice $L(t_2,\ldots,t_k)$  contains the vector $(1,t_2,\ldots,t_k)$ which is  fairly short for  small in absolute value  non-zero residues $t_2,\ldots,t_k$. For instance, for $t_i=\pm1$ the lattice  contains a vector of norm $\sqrt{k}$.  The following lemma shows that all very short vectors in the lattice are collinear.   
\begin{lemma}\label{lattice_1}
Let $(v_{1},v_{2},\ldots,v_{k})$ be a non-zero vector in $L(t_2,\ldots,t_k)$, where $\gcd(v_1,\ldots,v_k)=1$  and $|v_i|<\sqrt{q/2}$ and let $(w_{1},w_{2},\ldots,w_{k})$ be another vector there such that $|w_i|<\sqrt{q/2}$. Then
$$(w_{1},w_{2},\ldots,w_{k})=a(v_{1},v_{2},\ldots,v_{k})$$
for an integer $a$.
\end{lemma}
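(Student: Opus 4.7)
The plan is to exploit the fact that a vector $(v_1,\ldots,v_k)\in L(t_2,\ldots,t_k)$ is completely determined modulo $q$ by its first coordinate: the defining equations \eqref{lattice_L} read $v_j\equiv t_j v_1\bmod q$ for $j=2,\ldots,k$. In particular, if the first coordinate of a lattice vector vanishes, then every coordinate is a multiple of $q$.

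First I would observe that $v_1\neq 0$. Indeed, if $v_1=0$, then $v_j\equiv 0\bmod q$ for $j\geq 2$, and since $|v_j|<\sqrt{q/2}<q$, all $v_j$ would equal $0$, contradicting the non-zero hypothesis. Next, I would form the integer linear combination
\[
u = w_1(v_1,\ldots,v_k) - v_1(w_1,\ldots,w_k),
\]
which lies in $L(t_2,\ldots,t_k)$ by closure under integer combinations. Its first coordinate is $w_1v_1-v_1w_1=0$, so by the observation of the previous paragraph each component $u_j=w_1v_j-v_1w_j$ is divisible by $q$.

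The key estimate is then the triangle inequality together with the bounds $|v_j|,|w_j|<\sqrt{q/2}$:
\[
|w_1v_j-v_1w_j|\leq |w_1||v_j|+|v_1||w_j|<\tfrac{q}{2}+\tfrac{q}{2}=q.
\]
An integer multiple of $q$ of absolute value less than $q$ must be zero, so $w_1v_j=v_1w_j$ for every $j=1,\ldots,k$. This forces $(w_1,\ldots,w_k)$ and $(v_1,\ldots,v_k)$ to be proportional as rational vectors, with ratio $w_1/v_1$.

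Finally I would upgrade proportionality to integer proportionality. Writing $w_1/v_1=p/r$ in lowest terms with $\gcd(p,r)=1$, the relations $rw_j=pv_j$ give $r\mid pv_j$, hence $r\mid v_j$ for every $j$; combined with $\gcd(v_1,\ldots,v_k)=1$ this forces $r=\pm 1$, so $a:=w_1/v_1\in\mathbb{Z}$ and $(w_1,\ldots,w_k)=a(v_1,\ldots,v_k)$. The only slightly delicate point is tracking that the constant $\sqrt{q/2}$ in the hypothesis is exactly what is needed to push the cross-term $|w_1v_j-v_1w_j|$ strictly below $q$; everything else is routine.
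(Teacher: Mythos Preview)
Your argument is correct and follows essentially the same route as the paper's proof: both use the relations $v_j\equiv t_jv_1$, $w_j\equiv t_jw_1\bmod q$ to get $v_1w_j-v_jw_1\equiv 0\bmod q$, bound the absolute value by $q$ via $|v_j|,|w_j|<\sqrt{q/2}$, and then invoke $\gcd(v_1,\ldots,v_k)=1$ to make the ratio $w_1/v_1$ an integer. Your write-up is in fact more careful than the paper's, since you explicitly justify $v_1\neq 0$ and spell out the divisibility step at the end.
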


\begin{proof}
As $v_i=t_iv_1$ and $w_i=t_iw_1$, we get $v_1w_i- v_iw_1\equiv 0\mod q$ for $i=2,\ldots,k$. Since, $|v_1w_i- v_iw_1|<q$ we have $v_1w_i= v_iw_1$. Therefore,
$$(w_{1},w_{2},\ldots,w_{k})=\frac{w_1}{v_1}\,(v_{1},v_{2},\ldots,v_{k})$$
As $v_i$ are coprime, that is only possible when $v_1|w_1$. That proves the lemma. 
\end{proof}
So if the entries of every $V_i$ are $<\sqrt{q/2}$, then
$V_i=a_i(v_{1},v_{2},\ldots,v_{k})$
for the same vector $(v_{1},v_{2},\ldots,v_{k})$ and integer $a_i$.
 Then the vector $\bar{C}=(C_1,\ldots,C_k)$  satisfies $C_i=v_i(a_1,a_2,\ldots,a_{kn})$. Every entry of $C_i$ is $<\sqrt{q/2}$ and may be fairly small, for instance, when $t_i=\pm1$. In the latter case the norm of $\bar{C}$ is $\sqrt{k^2n}$. 
 
  However, since $C_{1},\ldots,C_{k}$ are 
collinear, the random variables $C_{j}e$ are  dependent. The statistic \eqref{statistic} does not provide proper decryption for $b_i=B_ix$.

We may admit some few of the vectors $V_i$ with entries $\geq \sqrt{q/2}$ and the rest of the vectors $V_i$ with entries $<\sqrt{q/2}$.   Then the norms of  $C_{1},\ldots,C_{k}$ become significantly larger and  there still is  a significant statistical dependence  between $C_{j}e$. So the decryption is not unique in this case either.
\section{Proposed Parameters} \label{parameters}  
In Table \ref{table:proposedParameter} we propose three sets of parameters matching Levels 1,\,3,\,5 in the call  of the NIST Post-Quantum Cryptography Standardization Process \cite{nist}. These parameters correspond to or exceed the brute-force security of AES-128,\,192,\,256  respectively.
The parameters are chosen to 
 minimise the encryption + decryption cost subject to the following constraints:  the security 
 provided according to the most efficient attack (primal attack for the underlying LWE problem, see Section \ref{pl_text_recover})  
 is close to the targeted security level,
 for implementation efficiency we set $q<2^{16}$, where $q$ is  close to  a power of 2, and $n$ is a multiple of 64, and
 $\lambda^2$ is a power of 2, where $\lambda^2|n$.
Two sets of parameters for each security level reaching a decryption failure probability of at most $10^{-5}$(type $A$) and $10^{-10}$(type $B$) respectively are provided. The decryption failure probability $1-\beta_1$ is computed according to Section \ref{failure}. 

\begin{table}[h]
\renewcommand{\arraystretch}{1.3}
\begin{center}
\begin{tabular}{|c|c|cccccc|}
\hline
Name & Level & $ n $ & $ k $ & $ q $ & $ \lambda^2 $ & $ \sigma $ & $1-\beta_1$ \\ \hline
EHT-light-A & \multirow{2}{*}{1} & $ 256  $ & $ 16 $ & $ 1021 $ & $ 32 $ & $ 8.8 $ & $7.4\cdot 10^{-6}$ \\ 
EHT-light-B &  & $ 256 $ & $ 25 $ & $ 2039 $ & $ 32 $ & $ 14.5 $ & $4.6\cdot 10^{-11}$ \\ \hline
EHT-medium-A & \multirow{2}{*}{3} & $ 384 $ & $ 14 $ & $ 2039 $ & $ 32 $ & $ 13.5  $ & $3.3\cdot 10^{-6}$ \\ 
EHT-medium-B &  & $ 384 $ & $ 24 $ & $ 2039 $ & $32$ & $ 13.5  $ & $4.8\cdot 10^{-11}$ \\ \hline
EHT-high-A & \multirow{2}{*}{5} & $448$ & $ 17 $ & $ 2039 $ & $32$ & $ 17.5 $ & $5.2\cdot 10^{-6}$ \\ 
EHT-high-B &  & $448$ & $ 24 $ & $ 4091 $ & $32$ & $ 27.0 $ & $5.6\cdot 10^{-11}$ \\ \hline
\end{tabular}
\end{center}
\caption{Proposed parameters.}
\label{table:proposedParameter}
\end{table}

The sizes of the public key $A$, the private key $C, T, B$, the plain-text $x$ and the cipher-
text $y$ for each security level are presented in Table \ref{table:sizeKeys}.

\begin{table}[h]
\renewcommand{\arraystretch}{1.3}
\begin{center}
\begin{tabular}{|c|cccc|}
\hline
Name & $A$ & $C,T,B$ & $x$ & $y$ \\ \hline
EHT-light-A & 1310.7 kB & 99.4 kB & 320 B & 5.1 kB \\ 
EHT-light-B & 2252.8 kB & 119.8 kB & 352 B & 8.8 kB \\ \hline
EHT-medium-A & 2838.5 kB & 227.7 kB & 528 B & 7.4 kB \\ 
EHT-medium-B & 4866.0 kB & 247.7 kB & 528 B & 12.7 kB \\ \hline
EHT-high-A & 4691.5 kB & 311.2 kB & 616 B & 10.5 kB \\ 
EHT-high-B & 7225.3 kB & 354.9 kB & 672 B & 16.1 kB \\ \hline
\end{tabular}
\end{center}
\caption{Sizes  in bytes (B) or  kilobytes (kB).}
\label{table:sizeKeys}
\end{table}

The approximate costs of the primal attack for each parameters' set are reported in Table \ref{table:plaintextRecoveryComplexity}.   The number of samples $m$ and BKZ block size $b$ have been chosen  to minimise  the attack cost. \textbf{C}, \textbf{Q} and \textbf{P} correspond respectively to classical, quantum and  best plausible quantum lattice sieve algorithm complexity according to Section 
\ref{pl_text_recover}.

\begin{table}[h]
\renewcommand{\arraystretch}{1.3}
\begin{center}
\begin{tabular}{|c|ccccc|}
\hline
Name & $m$ & $b$ & \textbf{C} & \textbf{Q} & \textbf{P} \\ \hline
EHT-light-A & 420 & 468 & 136 & 124 & 97 \\ 
EHT-light-B & 471 & 478 & 139 & 126 & 99 \\ \hline
EHT-medium-A & 653 & 728 & 212 & 193 & 151 \\ 
EHT-medium-B & 653 & 728 & 212 & 193 & 151 \\ \hline
EHT-high-A & 788 & 956 & 279 & 253 & 198 \\ 
EHT-high-B & 835 & 947 & 276 & 251 & 196 \\ \hline
\end{tabular}
\end{center}
\caption{Primal attack costs in $\log_2$-scale.  }
\label{table:plaintextRecoveryComplexity}
\end{table}

In Table \ref{table:keyRecovery}  we give the costs for the key-recovery attacks:  brute-force, time-memory-trade-off and the lattice-based attacks according to Section \ref{variation_1}.

%

\begin{table}[h]
\renewcommand{\arraystretch}{1.3}
\begin{center}
\begin{tabular}{|c|ccccc|c|c|}
\hline
\multirow{2}{*}{Name} & \multicolumn{5}{c|}{Lattice-based} & \multirow{2}{*}{Time-Memory} & \multirow{2}{*}{Bruteforce} \\
 & $h$ & $b$ & \textbf{C} & \textbf{Q} & \textbf{P} &  &  \\ \hline
EHT-light-A & 258 & 131 & 309 & 305 & 296 & 305 & 595 \\ 
EHT-light-B & 267 & 127 & 346 & 343 & 334 & 326 & 636 \\ \hline
EHT-medium-A & 314 & 115 & 311 & 308 & 299 & 318 & 620 \\ 
EHT-medium-B & 314 & 115 & 361 & 357 & 349 & 343 & 670 \\ \hline
EHT-high-A & 331 & 108 & 336 & 333 & 324 & 334 & 652 \\ 
EHT-high-B & 344 & 106 & 364 & 361 & 353 & 350 & 684 \\ \hline
\end{tabular}
\end{center}
\caption{Key recovery attacks cost in $\log_2$-scale.} 
\label{table:keyRecovery}
\end{table}

\section{Implementation and Performance}\label{performance} 

 EHT encryption/decryption was implemented in C with no external dependencies. The objective was to have a  proof-of-concept of the EHT, run experiments and make comparisons with other public-key protocols. By no means we claim this implementation to be ready for  the real world, or to be resistant to side-channel attacks, etc. Also, even if the efficiency was one of our main foci, there is certainly room for further optimization, such as the use of AVX2.

We made two implementations to run the same algorithms with a difference in memory usage:  only stack memory and a combination of stack and heap. The former provides with  the best performances and is referred in this section. Both variants of the implementation are publicly available at \url{https://github.com/AlessandroBudroni/EHT-C}.

\subsection{Performance}
To evaluate the performance of EHT we run it for the benchmark parameters in Table \ref{table:proposedParameter}. A machine with processor 3.60GHz Intel Core i7-7700 CPU, running Linux Mint 20 and with 32 GB of RAM was used. As a standard practice, TurboBoost was disabled during the tests. The code was compiled with \texttt{gcc -std=c11 -O3}.

\begin{table}[h]
\renewcommand{\arraystretch}{1.3}
\begin{center}
\begin{tabular}{|c|c|ccc|c|}
\hline
Name & Plaintext Load & KeyGen & Encryption & Decryption & \begin{tabular}[c]{@{}c@{}}Total\\ (Enc + Dec)\end{tabular} \\ \hline
EHT-light-A & 317.5  B &   699188 &       2055 &       8677 &      10732 \\
EHT-light-B & 349.25 B &   735385 &       3114 &      13322 &      16436 \\ \hline
EHT-medium-A & 525.25 B & 2320390 &       3836 &      14739 &      18575 \\
EHT-medium-B & 525.25 B & 2493254 &       6690 &      16843 &      23533 \\ \hline
EHT-high-A & 613.25 B   & 3518341 &       6286 &      29072 &      35358 \\
EHT-high-B & 669.00 B &   3597042 &       8296 &      35192 &      43488 \\ \hline
\end{tabular}
\end{center}
\caption{Performance of EHT in thousands of cycles.  }
\label{table:Performances}
\end{table}

  The performance of the implementation is detailed in Table \ref{table:Performances}. Encryption is considerably faster than decryption and key generation routines. This was  a design choice  when determining the parameters sets. The rationale was to keep the cost low on the client side, i.e. encryption, at the expense of increasing the cost on the server side, i.e. key-generation and decryption.

\subsection{Comparison with FrodoKEM}
 FrodoKEM \cite{FrodoKEM}  is one of the  candidates to the NIST Post-Quantum Standardization Process at Round 3 and the only one purely based on LWE left in the competition.
It is a key encapsulation mechanism designed to allow two parties to agree on a master secret of 16, 24 or 32 bytes. On the other hand, EHT is a public-key crypto-system that allows to encrypt a relatively large number of bytes at once. A scenario on which both the protocols find an application is when two parties must agree on a batch of several keys. Due to the large plaintext load provided by EHT, it is enough to make only one encryption run  to agree on a big master secret that can be used as the source for generating several keys. For example, EHT-light-B allows to exchange 349.25 bytes of plaintext, enough for generating 20 master secrets of 16 bytes each. In this scenario, EHT dominates  over FrodoKEM in speed.

The experimental results  reported in Table \ref{table:FrodoComparison} support that claim. Both the algorithms  were run  the necessary number of times to encrypt and then decrypt (encapsulate and decapsualte)  the same number of plain-text bits for each security level. The implementation of FrodoKEM available at \url{https://github.com/Microsoft/PQCrypto-LWEKE} with AVX2 enabled and \textit{openssl} disabled was used. Plaintext load is expressed in kilobytes (kB), measured time is expressed in seconds (s) and it is the mean of 5 run. One can see that, at all security levels, EHT is considerably faster than FrodoKEM.

\begin{table}[h]
\renewcommand{\arraystretch}{1.3}
\begin{center}
\begin{tabular}{|c|c|c|c|c|}
\hline
Level & Protocol & Plaintext Load & Repetitions & Time \\ \hline
\multirow{4}{*}{1} & EHT-light-A & \multirow{4}{*}{20 kB} & 63 & 0.19 s \\
 & EHT-light-B &  & 58 & 0.26 s \\
 & FrodoKEM-640-AES &  & 1250 & 1.25 s \\
 & FrodoKEM-640-SHAKE &  & 1250 & 3.00 s \\ \hline
\multirow{4}{*}{3} & EHT-medium-A & \multirow{4}{*}{30 kB} & 58 & 0.29 s \\
 & EHT-medium-B &  & 58 & 0.37 s \\
 & FrodoKEM-976-AES &  & 1250 & 2.44 s \\
 & FrodoKEM-976-SHAKE &  & 1250 & 6.35 s \\ \hline
\multirow{4}{*}{5} & EHT-high-A & \multirow{4}{*}{40 kB} & 66 & 0.64 s \\
 & EHT-high-B &  & 60 & 0.71 s \\
 & FrodoKEM-1344-AES &  & 1250 & 4.26 s \\
 & FrodoKEM-1344-SHAKE &  & 1250 & 11.23 s \\ \hline
\end{tabular}
\end{center}
\caption{Performance  of EHT against FrodoKEM.} 
\label{table:FrodoComparison}
\end{table}

\bibliographystyle{alpha}

\subsection{Appendix 1. Asymptotic value  of $k$.}\label{asymptotic_failure}
 If $C_{1+k(i-1)}e,\ldots,C_{k+k(i-1)}e$ are independently distributed, then 
 $
k=O(\ln q/\hbox{Div}(P_{\sigma\lambda}||U))
$
by Chernoff-Stein lemma, see \cite{CT}, where $\hbox{Div}(P_{\sigma\lambda}||U)$ is the Kullback-Leibler divergence of the distribution $ P_{\sigma\lambda}$ and the uniform distribution $U$ on residues modulo $q$. We have
$$
\hbox{Div}(P_{\sigma\lambda}||U)= \sum_{a}p_{\sigma\lambda}[a]\ln(q\,p_{\sigma\lambda}[a])=E\ln(q\,p_{\sigma\lambda}[a]).
$$
The distribution $P_{\sigma\lambda}$ may be approximated by an ordinary normal distribution with mean $0$ and variance $(\sigma\lambda)^2$, see \eqref{approx}. Then $\hbox{Div}(P_{\sigma\lambda}||U)\approx\ln(\frac{q}{\sigma\lambda\sqrt{2\pi}})-\frac{1}{2}=\ln(\frac{q}{\sigma\lambda\sqrt{2\pi e}})$ and  
$
k= O\left(\ln q/\ln(\frac{q}{\sigma\lambda\sqrt{2\pi e}})\right).
$
 If $c_{\sigma}+c_{\lambda}<c_{q}$ and they are constants, then $k$ is bounded by a constant for large $n$.  
   The vector $Ce$ has approximately a multivariate normal distribution with the  covariance matrix $\sigma^2CC^T$. The expected value of non-diagonal entries of $\sigma^2CC^T$ is bounded by a small parameter $\gamma$  by the matrix $C$ construction. When using a Hadamard matrix $H$ for constructing $C$ the non-diagonal entries are $0$. So the statistical dependence of the entries of $Ce$ is negligible and the asymptotic bound on $k$ is  correct.

\end{document}